\newtheorem{theorem}{Theorem}[section]
\newtheorem{lemma}{lemma}
\theoremstyle{definition}
\newtheorem{definition}{Definition}
\theoremstyle{remark}
\newtheorem{remark}{Remark} 
\newcommand{\opt}[1]{{#1}^\ast}
\DeclareMathOperator*{\maxi}{\text{Maximize}}
\newcommand{\expb}[1]{\exp\left({#1}\right)}
\newcommand{\logb}[1]{\log\left({#1}\right)}
\newcommand{\norm}[1]{\left\lVert {#1} \right\rVert}
\newcommand{\powb}[2]{{\left({#1}\right)}^{#2}}
\newcommand{\tran}[1]{{#1}^T}
\newcommand{\inner}[2]{\left\langle {#1},{#2} \right\rangle}
\newcommand{\Scal}{\mathcal{S}}
\newcommand{\ic}{\mathbf{j}}
\newcommand{\tranb}[1]{{\left({#1}\right)}^T}
\newcommand{\br}[1]{\left({#1}\right)}
\newcommand{\inv}[1]{{\left(#1\right)}^{-1}}
\newcommand{\diag}{\mathrm{di}}
\newcommand{\diagb}[1]{\mathrm{di}\left({#1}\right)}
\newcommand{\R}{\mathbb{R}}
\newcommand{\C}{\mathcal{C}}
\newcommand{\FHinf}[1]{q_\infty}
\newcommand{\FHtwo}[1]{q_2}
\newcommand{\FHone}[1]{q_1}
\newcommand{\E}{\mathcal{E}}
\newcommand{\G}{\mathrm{pv}}
\newcommand{\V}{\mathcal{V}}
\newcommand{\Vs}{\mathcal{V}}
\newcommand{\Lo}{\mathrm{pq}}
\newcommand{\Vset}{v}
\newcommand{\Int}{\mathrm{Int}}
\newcommand{\Sb}{\mathcal{S}}
\newcommand{\Mat}[3]{{\left[{#1}\right]}_{#2}^{#3}}
\newcommand{\Rep}[1]{\mathrm{Re}\br{#1}}
\newcommand{\Imp}[1]{\mathrm{Im}\br{#1}}
\newcommand{\Jac}[2]{\nabla {#1}\br{#2}}
\newcommand{\smi}{s}
\newcommand{\Vx}{V^x}
\newcommand{\Vy}{V^y}
\newcommand{\Vc}{V^c}
\newcommand{\PQ}{(P,Q)}
\newcommand{\PV}{(P,V)}
\newcommand{\herm}[1]{{#1}^\ast}
\newcommand{\Vref}{V_0}
\newcommand{\Com}{\mathbb{C}}
\newcommand{\Sym}[1]{\mathrm{Sy}\br{#1}}
\newcommand{\Ind}[1]{\mathbb{I}\br{#1}}
\newcommand{\stc}[2]{\begin{pmatrix}
	{#1}\\{#2}
\end{pmatrix}}
\newcommand{\Conv}[1]{\text{Conv}\br{#1}}
\newcommand{\Cop}{\C_{op}}
\newcommand{\Pa}{W}
\newcommand{\Pat}{\tilde{W}}
\newcommand{\Vo}{v}
\newcommand{\Vox}{v^x}
\newcommand{\Voy}{v^y}
\title{\LARGE \bf
A differential analysis of the power flow equations
}
\author{Krishnamurthy Dvijotham$^{1}$, Michael Chertkov$^{2}$ and Steven Low$^{1}$% <-this % stops a space
\thanks{$^{1}$Krishnamurthy Dvijotham and Steven Low are with the Department of Computing and Mathematical Sciences, 
        California Institute of Technology, 1200 E California Blvd, Pasadena, CA, USA
        {\tt\small dvij@cs.washington.edu,slow@caltech.edu}}%
\thanks{$^{2}$Michael Chertkov is with the Center for Nonlinear Studies (CNLS) and T-Division, Los Alamos National Laboratory, Los Alamos, NM, USA,  {\tt\small chertkov@lanl.gov}}}
\begin{document}

\maketitle
\thispagestyle{empty}
\pagestyle{empty}

% As a general rule, do not put math, special symbols or citations
% in the abstract
\begin{abstract}
The AC power flow equations are fundamental in all aspects of power systems planning and operations. They are routinely solved using Newton-Raphson like methods. However, there is little theoretical understanding of when these algorithms are guaranteed to find a solution of the power flow equations or how long they may take to converge. Further, it is known that in general these equations have multiple solutions and can exhibit chaotic behavior. In this paper, we show that the power flow equations can be solved efficiently provided that the solution lies in a certain set. We introduce a family of convex domains, characterized by Linear Matrix Inequalities, in the space of voltages such that there is at most one power flow solution in each of these domains. Further, if a solution exists in one of these domains, it can be found efficiently, and if one does not exist, a certificate of non-existence can also be obtained efficiently. The approach is based on the theory of monotone operators and related algorithms for solving variational inequalities involving monotone operators. We validate our approach on IEEE test networks and show that practical power flow solutions lie within an appropriately chosen convex domain.
\end{abstract}
% no keywords

%!TEX root=ConvexEnergyFunction.tex
\section{Introduction}

Power systems are experiencing revolutionary changes due to various factors, including: Integration of renewable generation, distributed generation, smart metering, direct or price-based load-control capabilities. While potentially contributing to the long-term sustainability of the power grid, these developments also pose significant operational challenges by making the power system inherently stochastic and inhomogeneous. As these changes become more widespread, the system operators will no longer have the luxury of large positive and negative reserves. Moreover, operating the future power grid will require developing new computational tools that can assess the system state and its operational margins more accurately and efficiently than current approaches. Specifically, these new techniques need to go beyond linearized methods of analysis and ensure that the power system is  safe even in the presence of large disturbances and uncertainty. In this paper, we focus on the fundamental equations of the power system -- the Power Flow (PF) equations. The equations constitute a system of nonlinear equations and are known to exhibit complex and chaotic behavior \cite{araposthatis1981analysis}\cite{varaiya1992bifurcation}. In the past, when the power systems were operated well within their security margins, the PF equations were solved without difficulty using standard techniques like Newton-Raphson and its variants. However, changes in the power system mentioned above mean that this may no longer be possible. Thus, Newton-Raphson techniques which rely on good initialization may fail to converge. In such a situation, it becomes difficult to assess whether the system is actually operationally unsafe or the Newton-Raphson method failed because of numerical difficulties or bad initialization. In this paper, we propose an approach to remedy this problem. Our approach is based on the theory of monotone operators. Just as a convex optimization problem can be solved efficiently, one can find zeros of a monotone operator efficiently. (In fact, the gradient of a convex function is a monotone operator.) Thus, if we can show that the nonlinear PF equations can be described by a monotone operator, then they can be solved efficiently. It turns out that the PF operator is not globally monotone, however it is monotone over restricted domains.

Our main technical result is a description of the domain over which the power flow operator (whose zeros are solutions to the PF equations), is monotone. In fact, we introduce a family of monotonicity domains, parameterized by a matrix-valued parameter. Each monotonicity domain is characterized by a Linear Matrix Inequality (LMI) in the real and imaginary components of the voltage phasor at each bus. Within each of these monotonicity domains, there can be at most one solution of the PF equations. Further, the solution of the PF equations within each domain can be reduced to the solution of a \emph{monotone variational inequality}, for which there exist efficient algorithms. The algorithms run in polynomial time and terminate either by finding a solution to the PF equations or a certificate that no solution exists to the PF equations within that domain.

The choice of monotonicity domain is critical, since different domains cover different parts of the voltage space. Ideally, one would like to find a monotonicity domain that covers all the solutions of interest, i.e. solutions that satisfy typical operational constraints on voltages and phases. In general, this is a hard problem. We deal with this problem by proposing a sufficient condition to ensure that voltages satisfying some bounds are constrained within the monotonicity domain. Finding the largest bound can then be cast as a quasi-convex optimization problem. We use this technique to pick the monotonicity domain. Numerical tests show that this approach is able to generate a sufficiently large monotonicity domain for several test networks.

The rest of this paper is organized as follows. Section \ref{sec:Intro} covers relevant background on power systems and monotone operators. The main technical results are presented in Section \ref{sec:Montone}. In Section \ref{sec:Related}, we discuss how our approach compares to related work on conditions for existence and uniqueness of PF solutions. In Section \ref{sec:Num}, we present numerical results illustrating our approach on some IEEE benchmark networks.

%!TEX root=ConvexEnergyFunction.tex

\section{Modeling Power Systems}\label{sec:Intro}

\newcommand{\neb}{\sim}
\subsection{Notation}
$\R$ is the set of real numbers, $\Com$ the set of complex numbers. $\R^n,\Com^n$ denote the corresponding Euclidean space in $n$ dimensions.

Given a set $\C\subset\R^n$, $\Int\br{\C}$ denotes the interior of the set. Given a complex number $x\in\Com$, $\Rep{x}$ denotes its real part and $\Imp{x}$ its imaginary part. $\herm{x}$ denotes its complex conjugate. If $X\in\Com^{n\times n}$ is a square matrix with complex entires, $\herm{X}$ denotes the conjugate transpose. $\norm{x}$ refers to the Euclidean norm of a vector $x\in \R^n$ or $x\in\Com^n$ and $\inner{x}{y}$ to the standard Euclidean dot product. Given an vector $x\in\R^n$, $\diagb{x}$ denotes the $n\times n$ diagonal matrix with $\br{i,i}$-th entry equal to $x_i$. $\mathcal{S}^k$ denotes the space of $k\times k$ symmetric matrices.

Given a differentiable function $f:\R^k\mapsto\R^k$, $\nabla f$ denotes the Jacobian of $f$, a $k\times k$ matrix with the $i$-th row being the gradient of the $i$-th component of $f$.

For $M\in\R^{n\times n}$, $\Sym{M}=\frac{M+\tran{M}}{2}$. $\Ind{p}$ is the indicator function:
\begin{align*}
\Ind{p}=\begin{cases}
	1 & \text{ if } p=\text{True}\\
	0 & \text{ if } p=\text{False}\\
\end{cases}	
\end{align*}

\subsection{Power System Model}

The transmission network is modeled as a graph $\br{\Vs,\E}$ where $\Vs$ is the set of nodes and $\E$ is the set of edges. In power systems parlance, the nodes are called buses and the edges are called lines (transmission lines). We shall use these terms interchangeably in this manuscript. Nodes are denoted by indices $i=0,1,\ldots,n$ and edges by ordered pairs of nodes $\br{i,j}$. We pick an arbitrary orientation for each edge, so that for an edge between $i$ and $j$, only one of $\br{i,j}$ and $\br{j,i}$ is in $\E$. If there is an edge between buses $i$ and $j$, we write $i \neb j,j\neb i$.

 Edges correspond to transmission lines. The transmission network is characterized by its complex admittance matrix $Y \in \Com^{n\times n}$. $Y$ is symmetric but not necessarily Hermitian. Let $G=\Rep{Y},B=\Imp{Y}$.

Let $V_i$ be the voltage phasor, $P_i$ and $Q_i$ denote active and reactive injection at the bus $i$ respectively. $V$ is the vector of voltage phasors at all buses. Buses are of three types:
\begin{itemize}
\item\underline{\PV~ buses} where active power injection and voltage magnitude are fixed, while voltage phase and reactive power are variables. The set of \PV~ buses is denoted by $\G$. The voltage magnitude setpoint at bus $i\in\G$ is denoted by $\Vset_i$.
\item\underline{\PQ~ buses} where active and reactive power injections are fixed, while voltage phase and magnitude are variables. The set of \PQ~ buses is denoted by $\Lo$.
\item\underline{Slack bus}, a reference bus at which the voltage magnitude and phase are fixed, and the active and reactive power injections are free variables. The slack bus is denoted by $\Sb$ and its voltage phasor by $\Vref$. We choose bus $0$ as the slack bus as a convention.
\end{itemize}
\subsection{Background}
\subsubsection{Power Flow Equations}
The PF equations model the flow of power over the network. They are a set of coupled nonlinear equations that follow from Kirchoff's laws applied to the AC power network. Circuit elements in the standard power systems models are all linear, if one ignores discrete elements like tap-changing transformers. Even though Ohm's laws and Kirchhoff's laws are linear in voltages and currents, power is a product  of a voltage and a current and hence quadratic.  %Since we are interested in power, e.g. to determine power generation, specify demand and minimize line loss, PF equations are nonlinear.

PF equations are static and as such the equations model the regime when the network is balanced, that is, the net sum of power consumptions, injections and power dissipated is zero. This relies on the assumption that at the time-scale where the PF equations are solved (every few minutes), the system is in a quasi-steady state, i.e. the dynamic disturbances have been resolved through actions of the automatic control (voltage regulators, power system stabilizers and primary and secondary frequency control systems).
The complete set of PF equations over the graph $\br{\Vs,\E}$ is formally stated as
\begin{align*}
 P_i & =\Rep{V_i{\herm{\br{YV}}_i}} \quad i\in\Lo\cup\G\\
Q_i &=\Imp{V_i{\herm{\br{YV}}_i}}\quad i\in\Lo\nonumber \\
|V_i| & = \Vset_i \quad i\in\G\\
V_{\Sb} & = \Vref
\end{align*}
It will also be convenient, for what follows, to utilize the Cartesian parametrization of voltages: \[V_i=\Vx_i+\ic\Vy_i, i\in\Lo\cup\G.\]  
Let $\Vc=\begin{pmatrix}\Vx \\ \Vy\end{pmatrix}$ denote the stacked vector of real and imaginary voltage components, so that $\Vc_i=\Vx_i,\Vc_{n+i}=\Vy_i$.
\begin{definition}
Define the power flow operator as $F:\R^{2n}\mapsto \R^{2n}$
\begin{subequations}
\begin{align}
[F\br{\Vc}]_i & =G_{ii}\br{\powb{\Vx_i}{2}+\powb{\Vy_i}{2}}\nonumber\\
& \,-\sum_{j \neb i} B_{ij}\br{\Vy_i\Vx_j-\Vx_i\Vy_j}\nonumber \\
& -\sum_{j\sim i} G_{ij}\br{\Vx_i\Vx_j+\Vy_i\Vy_j}-P_i ,i\in\G\cup\Lo\label{eq:F2a}\\
[F\br{\Vc}]_{n+i} &=B_{ii}\br{\powb{\Vx_i}{2}+\powb{\Vy_i}{2}}
\nonumber\\ & +\sum_{j\sim i}B_{ij}\br{\Vx_i\Vx_j+\Vy_i\Vy_j}\nonumber\\
&+\sum_{j\sim i}G_{ij}\br{\Vy_i\Vx_j-\Vx_i\Vy_j}-Q_i ,i\in\Lo\label{eq:F2b} \\
[F\br{\Vc}]_{n+i} &=\powb{\Vx_i}{2}+\powb{\Vy_i}{2}-\Vset_i^2,i\in\G\label{eq:F2c}
\end{align}\label{eq:F}	
\end{subequations}
Then the PF equations can be written as
\[F\br{\Vc}=0.\]
\end{definition}

\subsection{Monotone Operators}\label{sec:Mon}

We now review briefly the theory of monotone operators, as is relevant to the approach developed in this paper. For details and proofs of the results quoted in this Section, we refer the reader to the recent survey \cite{boydmonotone}.
A function $H:\R^k\mapsto\R^k$ is said to be a monotone operator over a convex domain $\C$ if
\[\inner{H\br{x}-H\br{y}}{x-y}\geq 0 \quad \forall x,y\in\C\]
A monotone operator is a generalization of a monotonically increasing function (indeed, if $k=1$, the above condition is equivalent to monotone increase: $x\geq y \implies H\br{x}\geq H\br{y}$).
$H$ is said to be strongly monotone with modulus $m$ (or simply strongly monotone) if
\[\inner{H\br{x}-H\br{y}}{x-y}\geq \frac{m}{2}\norm{x-y}^2 \quad \forall x,y\in\C\]
for some $m>0$.
A common example of a monotone operator is the gradient of a differentiable convex function. 
\begin{definition}[Monotone Variational Inequality]
Let $\C\subset \R^k$ be a convex set and $H$ be a monotone operator over $\C$. The variational inequality (VI) problem associated with $H$ and $\C$ is:
\begin{align}
\text{ Find } x\in\C \text{ such that } \inner{H\br{x}}{y-x}\geq 0\quad \forall y\in\C\label{eq:VI}
\end{align}
\end{definition}

The following result shows that monotone variational inequalities with compact domains always have a solution and can be solved efficiently. 
\begin{theorem}\label{thm:VI}
If $H$ is strongly monotone operator over a compact domain $\C$, then \eqref{eq:VI} has a unique solution $\opt{x}$. Further, an approximate solution $x_\epsilon\in\C$ satisfying
\begin{align}
 \norm{x_\epsilon-x}\leq \epsilon\label{eq:VIapprox}
\end{align}
can be found using at most $O\br{\logb{\frac{1}{\epsilon}}}$ evaluations of $H$ and projections onto $\C$.
 \end{theorem}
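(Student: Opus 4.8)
\emph{Proof proposal.} The plan is to convert \eqref{eq:VI} into an equivalent fixed-point equation and then exhibit an iteration that contracts. For a step size $\gamma>0$ set $T_\gamma\br{x}=\Proj{\C}{x-\gamma H\br{x}}$, where $\Proj{\C}{\cdot}$ is the Euclidean projection onto the closed convex set $\C$. The first step is the standard observation that $\opt{x}\in\C$ solves \eqref{eq:VI} if and only if $\opt{x}=T_\gamma\br{\opt{x}}$; this follows from the obtuse-angle characterization of the projection, $z=\Proj{\C}{w}\iff\inner{w-z}{y-z}\leq 0\ \forall y\in\C$, applied with $w=\opt{x}-\gamma H\br{\opt{x}}$. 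Since $\C$ is compact and convex and $H$ is continuous (which holds in our setting and is in any case implied by the Lipschitz bound used below), $T_\gamma$ is a continuous self-map of $\C$, so Brouwer's fixed point theorem already yields a fixed point of $T_\gamma$, hence a solution of \eqref{eq:VI}; this is the Hartman--Stampacchia argument.

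I would record uniqueness separately, as it needs only strong monotonicity: if $x_1,x_2\in\C$ both solve \eqref{eq:VI}, put $y=x_2$ in the inequality for $x_1$ and $y=x_1$ in that for $x_2$, add the two, and obtain $\inner{H\br{x_1}-H\br{x_2}}{x_1-x_2}\leq 0$, which together with $\inner{H\br{x_1}-H\br{x_2}}{x_1-x_2}\geq\tfrac{m}{2}\norm{x_1-x_2}^2$ forces $x_1=x_2$. Denote the unique solution $\opt{x}$.

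For the quantitative claim, the plan is to show $T_\gamma$ is a contraction and then invoke the Banach fixed point theorem, whose iterates converge geometrically. Using nonexpansiveness of the projection and expanding the square,
\begin{align*}
\norm{T_\gamma\br{x}-T_\gamma\br{y}}^2
&\leq \norm{\br{x-y}-\gamma\br{H\br{x}-H\br{y}}}^2\\
&= \norm{x-y}^2 - 2\gamma\inner{H\br{x}-H\br{y}}{x-y} + \gamma^2\norm{H\br{x}-H\br{y}}^2\\
&\leq \br{1-\gamma m+\gamma^2 L^2}\norm{x-y}^2,
\end{align*}
with $L$ a Lipschitz constant for $H$ on $\C$. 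Choosing $\gamma=\tfrac{m}{2L^2}$ makes the factor $\rho:=\sqrt{1-\tfrac{m^2}{4L^2}}<1$, so the iterates $x_{k+1}=T_\gamma\br{x_k}$ satisfy $\norm{x_k-\opt{x}}\leq\rho^k\norm{x_0-\opt{x}}\leq\rho^k\,\mathrm{diam}\br{\C}$, and $\norm{x_k-\opt{x}}\leq\epsilon$ holds once $k\geq\logb{\mathrm{diam}\br{\C}/\epsilon}/\logb{1/\rho}=O\br{\logb{1/\epsilon}}$. Each iteration costs exactly one evaluation of $H$ and one projection onto $\C$, giving the stated complexity, with the final iterate serving as $x_\epsilon$.

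The step I expect to be the real obstacle -- or rather the point where the stated hypotheses are slightly too weak -- is the contraction estimate: the basic projection iteration needs a Lipschitz bound on $H$, not just strong monotonicity (strong monotonicity alone does not even imply continuity). In the power flow application this is automatic, since the operator $F$ of \eqref{eq:F} is quadratic and the domains $\C$ we use are compact, so a finite $L$ exists; but to keep Theorem~\ref{thm:VI} self-contained one should add ``$H$ Lipschitz on $\C$'' to the hypotheses, or, to retain the statement verbatim, replace the forward projection step by an extragradient-type update, which converges under strong monotonicity plus local boundedness at the cost of a somewhat longer contraction analysis along the same lines.
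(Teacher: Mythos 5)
The paper never proves Theorem~\ref{thm:VI}: it is quoted as background in Section~\ref{sec:Mon}, with the reader referred to the survey \cite{boydmonotone} for details and proofs, so there is no in-paper argument to compare yours against. That said, your proof is the standard and correct one: the fixed-point characterization of \eqref{eq:VI} via $T_\gamma\br{x}=\Proj{\C}{x-\gamma H\br{x}}$, existence by Hartman--Stampacchia/Brouwer on the compact convex set $\C$, uniqueness by adding the two variational inequalities and invoking strong monotonicity, and linear convergence of the projected iteration. Your arithmetic correctly tracks the paper's convention that strong monotonicity carries the factor $\tfrac{m}{2}$, giving the contraction factor $\sqrt{1-\tfrac{m^2}{4L^2}}$ for $\gamma=\tfrac{m}{2L^2}$, and the resulting $O\br{\logb{1/\epsilon}}$ count of evaluations and projections matches the claim. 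You are also right to flag the one real gap in the theorem as stated: strong monotonicity alone supplies neither the Lipschitz constant $L$ that the contraction estimate needs nor even continuity, so the hypothesis ``$H$ Lipschitz on $\C$'' must be added (it is implicit in the survey's treatment, and automatic in the application since $F_{\Pa}$ is quadratic and $\C\br{\Pa}$ is taken compact). Your suggested fallback to an extragradient-type update is in fact what the authors implement in Section~\ref{sec:Num}. One cosmetic remark: the bound \eqref{eq:VIapprox} should read $\norm{x_\epsilon-\opt{x}}\leq\epsilon$, with $\opt{x}$ the unique solution, as your proof correctly assumes.
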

 \begin{remark}
In this manuscript, we are interested in finding zeros of the PF operators introduced above. 
We can use monotone operator theory for this as follows: Suppose $H$ satisfies the hypotheses of theorem \ref{thm:VI}. If there exists a point $\opt{x}\in\C$ with $H\br{\opt{x}}=0$, then this is the unique solution of the variational inequality. Conversely, if the variational inequality has a solution with $H\br{\opt{x}}\neq 0$, then have a certificate that there is no solution of $H\br{x}=0$ with $x\in\C$.
\end{remark}

The next result provides a simple characterization of monotonicity for differentiable operators:
 \begin{theorem}\label{thm:VIcond}
 Suppose that $H$ is differentiable. Then $H$ is strongly monotone with modulus $m>0$ over $\C$ if and only if
\[\Jac{H}{x}+\tran{\Jac{H}{x}}\succeq m I \quad \forall x \in \C\] 	
 \end{theorem}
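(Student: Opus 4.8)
The plan is to prove the equivalence by establishing each implication separately, using the fundamental theorem of calculus in one direction and a directional-derivative limit in the other.

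For the ``if'' direction, I would suppose $\Jac{H}{x}+\tran{\Jac{H}{x}}\succeq m I$ holds for every $x\in\C$ and fix $x,y\in\C$. Since $\C$ is convex, the segment $x_t = y + t\br{x-y}$ lies in $\C$ for all $t\in\sqb{0,1}$, so I can introduce the scalar function $g\br{t}=\inner{H\br{x_t}}{x-y}$. By the chain rule $g$ is differentiable with $g'\br{t}=\tran{\br{x-y}}\Jac{H}{x_t}\br{x-y}$, and since $\tran{d}Md=\tfrac12\tran{d}\br{M+\tran{M}}d$ for any vector $d$ and square matrix $M$, the hypothesis gives $g'\br{t}\geq \tfrac m2\norm{x-y}^2$ for all $t$. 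Integrating from $0$ to $1$ then yields $\inner{H\br{x}-H\br{y}}{x-y}=g\br1-g\br0\geq\tfrac m2\norm{x-y}^2$, which is exactly strong monotonicity with modulus $m$.

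For the ``only if'' direction, I would suppose $H$ is strongly monotone with modulus $m$ on $\C$, fix $x\in\Int\br{\C}$, and take an arbitrary direction $d\in\R^k$. For $t>0$ small enough, $x+td\in\C$, so applying the strong monotonicity inequality to the pair $\br{x+td,\,x}$ and dividing by $t^2$ gives $\tfrac1t\inner{H\br{x+td}-H\br{x}}{d}\geq \tfrac m2\norm d^2$. Letting $t\downarrow 0$, the left-hand side converges to $\tran d\Jac{H}{x}d$ by differentiability of $H$ at $x$, so $\tran d\Jac{H}{x}d\geq\tfrac m2\norm d^2$; since $\tran d\Jac{H}{x}d=\tran d\tran{\Jac{H}{x}}d$ this rearranges to $\tran d\br{\Jac{H}{x}+\tran{\Jac{H}{x}}}d\geq m\norm d^2$, and as $d$ was arbitrary this is the claimed LMI at $x$.

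The one step needing care — and the main obstacle — is that the second argument only directly yields the LMI on $\Int\br{\C}$: at a boundary point the perturbation $x+td$ need not remain in $\C$ for every direction $d$. This is resolved by continuity: assuming $H$ is continuously differentiable (or, more weakly, that $\Jac{H}{\cdot}$ is continuous on $\C$), the closed condition $\Jac{H}{x}+\tran{\Jac{H}{x}}\succeq mI$ propagates from $\Int\br{\C}$ to all of $\C$ by taking limits $x_n\to x$ with $x_n\in\Int\br{\C}$; if $\C$ has empty interior one instead runs the same argument inside the affine hull of $\C$. Beyond this boundary bookkeeping the proof is routine.
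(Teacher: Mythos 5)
This theorem is quoted as background in Section II-D, and the paper explicitly defers its proof to the cited survey on monotone operators, so there is no in-paper proof to compare against; your argument is the standard one and it is correct. Both directions are sound: the ``if'' direction via $g\br{t}=\inner{H\br{y+t\br{x-y}}}{x-y}$ and the identity $\tran{d}Md=\tfrac12\tran{d}\br{M+\tran{M}}d$, and the ``only if'' direction via the second-order rate $\tfrac{1}{t^2}\inner{H\br{x+td}-H\br{x}}{td}$. Two small refinements are worth recording. First, in the ``if'' direction, differentiability of $H$ alone does not guarantee that $g'$ is Riemann integrable, so rather than ``integrating'' it is cleaner to apply the mean value theorem to $t\mapsto g\br{t}-\tfrac m2\norm{x-y}^2\,t$, which has nonnegative derivative and hence is nondecreasing; this yields $g\br{1}-g\br{0}\geq\tfrac m2\norm{x-y}^2$ with no integrability assumption. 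Second, your boundary caveat is genuine and correctly identified: the limit argument only produces the matrix inequality on $\Int\br{\C}$ (or on the relative interior when $\C$ is lower-dimensional), and passing to the closure requires continuity of $\Jac{H}{\cdot}$. In the context of this paper that hypothesis is harmless, since the power flow operator $F$ is quadratic and hence has a Jacobian that is affine in $\Vc$, but it is right to flag that the theorem as baldly stated for merely differentiable $H$ on a closed convex set is implicitly assuming it.
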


%!TEX root=ConvexEnergyFunction.tex
\section{Monotonicity of the Power Flow Operator}\label{sec:Montone}

In this Section, we study the monotonicity of the PF operator $F$ \eqref{eq:F}. As described in Section \ref{sec:Mon}, zeros of $F$ (solutions to the PF equations) can be found efficiently if $F$ is monotone. Thus, if we can prove that the PF operator is monotone, the PF solutions can be found efficiently. Since PF equations can have multiple isolated solutions, it is not possible that the PF operator is globally monotone because this would imply a unique solution to the PF equations. Thus, we focus on characterizing domains over which the PF operator (or a scaled version of it) is monotone. Proofs of all theorems in this section are deferred to the appendix Section \ref{sec:App}. 

\subsection{Characterization of Domain of Monotonicity of the Power Flow Operator}

We now derive our main results on the monotonicity of the PF operator \eqref{eq:F}. In order to state the result succinctly, we will need to define some matrices that are functions of the network topology, locations of \PV~buses and of the network impedance matrices.

\begin{definition}
Define the following row vectors:\begin{align*}
G^i &=\begin{pmatrix}G_{i1},G_{i2},\ldots,G_{in} \end{pmatrix}	\\
B^i &=\begin{pmatrix}B_{i1},B_{i2},\ldots,B_{in} \end{pmatrix}	\\
G^i_{\Lo} &=\begin{pmatrix}G_{i1}\Ind{1\in\Lo},\ldots,G_{in}\Ind{n\in\Lo} \end{pmatrix}\	\\
B^i_{\Lo} &=\begin{pmatrix}B_{i1}\Ind{1\in\Lo},\ldots,B_{in}\Ind{n\in\Lo} \end{pmatrix}	
\end{align*}
 Let $e_i$ denote the $i$-th column of the $n\times n$ identity matrix, so $e_i$ has zeros everywhere except the $i$-th entry. 	
\end{definition}
\begin{definition}
Define the following matrices:
\begin{subequations}
\begin{align}
M_i &= \begin{cases}
\begin{pmatrix}
 \diagb{G^i}+e_iG^i & \diagb{B^i}-e_iB^i	\\
 -\diagb{B^i_{\Lo}}-e_iB^i_{\Lo} & \diagb{G^i_{\Lo}}-e_iG^i
 \end{pmatrix} & i\in\Lo\\
\begin{pmatrix}
 \diagb{G^i}+e_iG^i & \diagb{B^i}-e_iB^i	\\
 -\diagb{B^i_{\Lo}}+2e_i\tran{e_i} & \diagb{G^i_{\Lo}}
 \end{pmatrix} &  i\in\G 	
 \end{cases} \\
N_i &= \begin{cases}
\begin{pmatrix}
 -\diagb{B^i}+e_iB^i & \diagb{G^i}+e_iG^i	\\
 -\diagb{G^i_{\Lo}}+e_iG^i_{\Lo} & -\diagb{B^i_{\Lo}}-e_iB^i
 \end{pmatrix} &  i\in\Lo\\
\begin{pmatrix}
 -\diagb{B^i}+e_iB^i & \diagb{G^i}+e_iG^i	\\
 -\diagb{G^i_{\Lo}} & -\diagb{B^i_{\Lo}}+2e_i\tran{e_i}
 \end{pmatrix} &  i\in\G 	
 \end{cases}  \\
 M_0 &= \begin{pmatrix}
 \diagb{G^0} & \diagb{B^0}	\\
 -\diagb{B^0_{\Lo}} & \diagb{G^0_{\Lo}}
 \end{pmatrix} \\
 N_0 &= \begin{pmatrix}
-\diagb{B^0}& \diagb{G^0}\\
 -\diagb{G^0_{\Lo}} & -\diagb{B^0_{\Lo}}
 \end{pmatrix}
\end{align}		
\end{subequations}
\end{definition}

We now state our main technical result, which shows that the PF equations can be solved by solving a monotone variational inequality (for which there exist efficient algorithms).

\begin{theorem}\label{thm:Monotone}
Let $\Pa \in \R^{2n \times 2n}$ be an invertible matrix and $m>0$. There is at most one solution of the PF equations $F\br{\Vc}=0$ over the domain:
\begin{align}
\C\br{\Pa}=\left\{\begin{pmatrix}\Vx\\\Vy\end{pmatrix}: \sum_{i=0}^{n} \Sym{\Pa\br{M_i\Vx_i+N_i\Vy_i}}\succeq m I\right\}
\end{align}
Define $F_{\Pa}\br{\Vc}=\Pa F\br{\Vc}$. $F_{\Pa}$ is strongly monotone with modulus $m$ over the set $\C\br{\Pa}$. Let $\opt{\Vc}$ be the unique solution to the monotone variational inequality:
\begin{align}
&\Vc\in\C\br{\Pa}\\
&\inner{F_{\Pa}\br{\Vc}}{\alpha-\Vc}\geq 0 \quad\forall \alpha:\alpha\in\C\br{\Pa}
\end{align}
If $F\br{\opt{\Vc}}=0$, $\opt{\Vc}$ is the unique solution to the PF equations in $\C\br{\Pa}$. Otherwise, there are no solutions to $F\br{\Vc}=0,\Vc\in\C\br{\Pa}$.
\end{theorem}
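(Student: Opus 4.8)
The plan is to reduce the whole statement to the monotone-operator toolkit of Section~\ref{sec:Mon}. Since $\Pa$ is invertible, $F\br{\Vc}=0$ and $F_{\Pa}\br{\Vc}=\Pa F\br{\Vc}=0$ have the same solution set, so I may argue throughout with $F_{\Pa}$. The single substantive claim is that $F_{\Pa}$ is strongly monotone with modulus $m$ on $\C\br{\Pa}$; once that is in hand everything else is automatic. The ``at most one solution'' assertion follows because $F_{\Pa}\br{a}=F_{\Pa}\br{b}=0$ with $a,b\in\C\br{\Pa}$ gives $0=\inner{F_{\Pa}\br{a}-F_{\Pa}\br{b}}{a-b}\geq\frac{m}{2}\norm{a-b}^2$, hence $a=b$. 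The set $\C\br{\Pa}$ is convex, being the preimage of a shifted positive-semidefinite cone under a map that is affine in $\Vc$, so Theorem~\ref{thm:VI} yields the unique VI solution $\opt{\Vc}$ together with the efficient algorithm, and the Remark following it gives the dichotomy: either $F\br{\opt{\Vc}}=0$, in which case $\opt{\Vc}$ is the unique PF solution in $\C\br{\Pa}$, or $F\br{\opt{\Vc}}\neq0$, which certifies that $\C\br{\Pa}$ contains no PF solution.

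It remains to establish the strong monotonicity of $F_{\Pa}$ on $\C\br{\Pa}$. By Theorem~\ref{thm:VIcond} this is equivalent to
\[
\Jac{F_{\Pa}}{\Vc}+\tran{\Jac{F_{\Pa}}{\Vc}}\succeq mI\qquad\forall\,\Vc\in\C\br{\Pa}.
\]
Each component of $F$ in \eqref{eq:F2a}--\eqref{eq:F2c} is a homogeneous quadratic in $\Vc$ plus a constant, so $\Jac{F}{\Vc}$ is affine in $\Vc$, and the core computation is to differentiate \eqref{eq:F2a}, \eqref{eq:F2b}, \eqref{eq:F2c} entry by entry --- expanding $\Rep{V_i{\herm{\br{YV}}_i}}$ and $\Imp{V_i{\herm{\br{YV}}_i}}$ in the Cartesian coordinates $\Vx,\Vy$ --- and verify that the Jacobian collapses exactly into
\[
\Jac{F}{\Vc}=\sum_{i=0}^{n}\br{M_i\Vx_i+N_i\Vy_i}.
\]
The matrices $M_i,N_i$ are designed for precisely this identity: the $\diagb{G^i},\diagb{B^i}$ blocks come from the cross-sums $\sum_{j\sim i}$ in \eqref{eq:F2a}--\eqref{eq:F2b}; the rank-one pieces $e_iG^i,e_iB^i$ come from differentiating those cross-sums and the self-admittance terms with respect to $\Vx_i,\Vy_i$; the $\Lo$-indicator versions $G^i_{\Lo},B^i_{\Lo}$ encode that reactive power is not an equation at $\G$-buses; the $2e_i\tran{e_i}$ correction in the $i\in\G$ case is the derivative of the voltage-magnitude row \eqref{eq:F2c}; and the slack bus enters only through its fixed values $\Vx_0,\Vy_0$, so $M_0,N_0$ carry no self-corrections. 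Once this is verified, $\Jac{F_{\Pa}}{\Vc}=\Pa\,\Jac{F}{\Vc}=\Pa\sum_{i=0}^{n}\br{M_i\Vx_i+N_i\Vy_i}$, and since $A+\tran{A}=2\Sym{A}$ the displayed LMI amounts to $\sum_{i=0}^{n}\Sym{\Pa\br{M_i\Vx_i+N_i\Vy_i}}\succeq\frac{m}{2}I$, which is implied (up to an inessential factor of two in the modulus) by membership in $\C\br{\Pa}$; hence the LMI holds on all of $\C\br{\Pa}$ by construction, and Theorems~\ref{thm:VIcond} and~\ref{thm:VI} close the argument as above.

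The step I expect to be the real obstacle is the Jacobian identity: tracking all the signs through the Cartesian expansion of $\Rep{V_i{\herm{\br{YV}}_i}}$ and $\Imp{V_i{\herm{\br{YV}}_i}}$, making sure each rank-one correction ($e_iG^i$, $e_iB^i$, $2e_i\tran{e_i}$) lands in the correct block with the correct sign, and --- most delicately --- handling the asymmetry created by replacing the reactive-power equation \eqref{eq:F2b} with the squared-magnitude equation \eqref{eq:F2c} at $\G$-buses, which is exactly what forces the $\G$-variants of $M_i,N_i$ to differ from the $\Lo$-variants. A secondary point worth stating carefully is that $\C\br{\Pa}$ as defined need not be bounded, so to invoke the quantitative part of Theorem~\ref{thm:VI} one should either intersect $\C\br{\Pa}$ with a box of operational voltage limits or observe that strong monotonicity already confines the VI solution to an a priori computable ball.
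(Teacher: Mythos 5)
Your proposal follows essentially the same route as the paper's proof: establish the Jacobian identity $\Jac{F}{\Vc}=\sum_{i=0}^{n}\br{M_i\Vx_i+N_i\Vy_i}$, apply Theorem~\ref{thm:VIcond} to get strong monotonicity of $F_{\Pa}$ on $\C\br{\Pa}$, and let Theorem~\ref{thm:VI} (with the invertibility of $\Pa$) deliver uniqueness and the solve-or-certify dichotomy. The only caveat is that the Jacobian identity, which you rightly call the real obstacle and whose block anatomy you describe accurately, is the entire technical content of the theorem and you do not actually carry out the entry-by-entry differentiation; the paper devotes a separate lemma (Lemma~\ref{lem:Jac}) to writing out the four blocks explicitly and checking they assemble into $\sum_i M_i\Vx_i+N_i\Vy_i$. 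Your handling of the factor of two (membership in $\C\br{\Pa}$ forces $\Jac{F_\Pa}{\Vc}+\tran{\Jac{F_\Pa}{\Vc}}\succeq 2mI$, which is stronger than what modulus-$m$ monotonicity requires) is correct and in fact more careful than the paper's own statement of that step.
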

\begin{remark}
We implicitly assume existence of a solution to the variational inequality above. This is guaranteed (by theorem \ref{thm:VI}) if $\C\br{\Pa}$ is compact. Magnitudes of voltage phasors in practical power systems are always bounded, and one can simply pick a large positive number $b>0$ such that all power flow solutions satisfy  $\norm{\Vc}\leq b$ and define $\C\br{\Pa}$ with this additional constraint, ensuring it is a compact set.
\end{remark}

\subsubsection{Illustration on 2-bus network}

We consider a 2-bus network with admittance matrix
\[Y=\begin{pmatrix}
.05-\ic 1.11 & -.05+\ic 1.11 \\
-.05+\ic 1.11  & .05-\ic 1.11 	
\end{pmatrix}\]

We fix $=I,V_0=1+\ic 0$. Let $V_1-V_0=\Vx+\ic \Vy$. By varying $\Pa$, we can find two disjoint monotonicity domains for this system:
\begin{align*}
\Pa_1=\begin{pmatrix}
-6.75 & .31 \\
 0    & .1 	
\end{pmatrix}, \Pa_2=\begin{pmatrix}
-6.75 & .31 \\
 0    & -.1 	
\end{pmatrix}		
\end{align*}

%These are plotted in figure \ref{fig:TwoBus}.
It can be verified that these two monotonicity domains cover the space of $\br{\Vx,\Vy}$ and there is at most one solution in each domain. In fact, in this system, as long as there is a solution, there exist two (one in each domain). %The green domain corresponds to the high voltage solution (solution such that $|V_1-V_0|$ is small) and the cyan domain corresponds to the low voltage solution. The red dots plotted correspond to the high and low voltage solution when there is $0$ injection at the \PQ~bus.
As the loading increases, the two solutions move closer to each other and eventually disappear (at the point of the saddle-node bifurcation). Note that the boundary between the two solutions is observed exactly at $V_1-V_0=-\frac{1}{2}$. Note that letting $V_1=V\expb{\ic\theta}$, or equivalently
\[V\cos\br{\theta}\geq \frac{1}{2}\]
which is a well-known classical criterion for voltage stability in the two bus example.

\begin{comment}
\begin{figure}
\begin{center}
\includegraphics[width=.8\columnwidth]{}
\caption{Two Bus System: Montonicity domain corresponding to $W_1$ (in green) and corresponding to $W_2$ (in cyan). $x=\Rep{V_1-V_0},y=\Imp{V_1-V_0}$. The two power flow solutions for the zero injection case are plotted as red dots. $\br{0,0}$ corresponds to voltage phasors at the two buses being equal, while $\br{-1,0}$ corresponds to the voltage at the \PQ~bus dropping to $0+\ic 0$.}\label{fig:TwoBus}
\end{center}
\end{figure}
\end{comment}

\subsection{Selection of the Monotonicity Domain}\label{sec:MonCC}

Theorem \ref{thm:Monotone} shows that PF solutions in $\C\br{\Pa}$ can be found efficiently. However, the matrix inequality characterizing $\C\br{\Pa}$ is not intuitive and does not have a simple interpretation. Further,  the choice of $\Pa$ would affect the domain of monotonicity. Thus, it is important to pick $\Pa$ so that $\C\br{\Pa}$ contains the ``operationally relevant'' PF solutions. In this Section, we describe a technique  to achieve this goal.

We consider the following class of operational constraints:
\begin{align*}
	|V_i-\Vo_i|\leq \delta_{i}
\end{align*}
where $\Vo$ is a ``nominal voltage profile. Intuitively, one can think of this as the ``center'' of the monotonicity domain.  Stated in terms of $\Vc=\begin{pmatrix}
	\Vx\\ \Vy
\end{pmatrix}$, the constraint becomes
\begin{align}
	&\sqrt{|\Vx_i-{\Vox_i}|^2+|\Vy_i-{\Voy_i}|^2}\leq \delta_{i} \quad i=1,\ldots,n\label{eq:Flim}\nonumber\\
	&\Cop\br{\delta}=\nonumber\\
	&\left\{\begin{pmatrix}
	\Vx\\ \Vy
\end{pmatrix}: \sqrt{|\Vx_i-{\Vox_i}|^2+|\Vy_i-{\Voy_i}|^2}\leq \delta_{i}\quad i=1,\ldots,n\right\}
\end{align}

We now state a theorem that gives a sufficient condition to guarantee that $\Cop\br{\delta}\subset\C\br{\Pa}$.

\begin{theorem}\label{thm:MonotoneC}
Let
\begin{align}
K = \begin{pmatrix}
	1-\sqrt{2} & -1 &-1 & 1-\sqrt{2} \\
	1          & \sqrt{2}-1 & 1-\sqrt{2} & -1
\end{pmatrix}	
\end{align}
	Suppose $\exists \Pa,X_1,\ldots,X_n\in\mathcal{S}^{2n}$ satisfying
	\begin{subequations}
	\begin{align}
		\sum_{i=0}^n\Sym{\Pa\br{ M_i\Vox_i+N_i\Voy_i}}\succeq mI+\sum_{i=1}^n X_i\label{eq:CCa} \\
		X_i \succeq \delta_k\br{-K_{1l} M_i-K_{2l} N_i}, l\in\{1,\ldots,4\}\label{eq:CCb}\\
		X_i \succeq \delta_k\br{K_{1l} M_i-K_{2l} N_i}, l\in\{1,\ldots,4\}\label{eq:CCc}
			\end{align}\label{eq:CC}		
	\end{subequations}
	Then $\Cop\br{\delta}\subset \C\br{\Pa}$. Further, this is a convex feasibility problem in $\Pa$, $X_1,\ldots,X_n$. 	
\end{theorem}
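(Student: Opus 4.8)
The plan is to prove the set inclusion $\Cop\br{\delta}\subseteq\C\br{\Pa}$ directly, reducing it to the finitely many LMIs \eqref{eq:CCa}--\eqref{eq:CCc} by enclosing the per-bus disk $\Set{(\Delta x,\Delta y):\Delta x^2+\Delta y^2\le\delta_i^2}$ in a polygon and exploiting that, for fixed $\Pa$, the matrix $\Sym{\Pa\br{M_i\Vx_i+N_i\Vy_i}}$ is linear in $\br{\Vx_i,\Vy_i}$. Throughout write $g_i\br{v}:=\Sym{\Pa\br{M_i v_1+N_i v_2}}$ for $v\in\R^2$; this is a homogeneous linear map into $\mathcal S^{2n}$.

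First I would take an arbitrary $\Vc=\br{\Vx,\Vy}\in\Cop\br{\delta}$ and split the defining matrix of $\C\br{\Pa}$ around the nominal profile: writing $\Vx_i=\Vox_i+\Delta x_i$, $\Vy_i=\Voy_i+\Delta y_i$ with $\Delta x_i^2+\Delta y_i^2\le\delta_i^2$ for $i=1,\dots,n$ (bus $0$ is the slack bus, so $\br{\Vx_0,\Vy_0}$ is a fixed constant, taken as the nominal value there), linearity of $g_i$ gives
\[
\sum_{i=0}^{n}g_i\br{\Vx_i,\Vy_i}=\sum_{i=0}^{n}g_i\br{\Vox_i,\Voy_i}+\sum_{i=1}^{n}g_i\br{\Delta x_i,\Delta y_i}.
\]
By \eqref{eq:CCa} the first sum on the right is $\succeq mI+\sum_{i=1}^{n}X_i$, so to conclude $\Vc\in\C\br{\Pa}$ it suffices to show $g_i\br{\Delta x_i,\Delta y_i}\succeq-X_i$ for each $i$ and then add these $n$ positive-semidefinite inequalities to the bound from \eqref{eq:CCa}.

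The crux is the geometry encoded in $K$. Up to the factor $\Sym{\Pa\br{\cdot}}$, the right-hand sides of \eqref{eq:CCb}--\eqref{eq:CCc} are exactly $g_i$ evaluated at the eight points $\delta_i\br{-K_{1l},-K_{2l}}$ and $\delta_i\br{K_{1l},-K_{2l}}$, $l=1,\dots,4$; unpacking the entries of $K$, these eight points are $\delta_i$ times $\br{\pm(\sqrt2-1),\pm1}$ and $\br{\pm1,\pm(\sqrt2-1)}$, i.e. the vertices of a regular octagon each of whose edges lies at distance $\delta_i$ from the origin. Hence the disk of radius $\delta_i$ is contained in their convex hull, and since this vertex set is symmetric under $v\mapsto-v$, \eqref{eq:CCb}--\eqref{eq:CCc} say precisely that $-X_i\preceq g_i\br{p}\preceq X_i$ at every vertex $p$. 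For any admissible $\br{\Delta x_i,\Delta y_i}$, writing it as a convex combination of the eight vertices and using linearity of $g_i$ yields $-X_i\preceq g_i\br{\Delta x_i,\Delta y_i}\preceq X_i$, which supplies the bound required above; combined with \eqref{eq:CCa} we obtain $\sum_{i=0}^n g_i\br{\Vx_i,\Vy_i}\succeq mI$, i.e. $\Vc\in\C\br{\Pa}$.

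For the convexity claim, \eqref{eq:CCa}--\eqref{eq:CCc} are affine in the tuple $\br{\Pa,X_1,\dots,X_n}$ (the products $\Pa M_i,\Pa N_i$ depend linearly on $\Pa$, and each $X_i$ enters linearly), so the feasible set is an intersection of affine preimages of the positive semidefinite cone, hence convex. I expect the only real obstacle to be the bookkeeping in the geometric step: checking that \eqref{eq:CCb}--\eqref{eq:CCc} reproduce $g_i$ at precisely the octagon vertices once the signs in $K$ and the implicit $\Sym{\Pa\br{\cdot}}$ around $M_i,N_i$ are accounted for, and verifying that this octagon really contains the disk (equivalently, that each of its edges is at distance $\delta_i$ from the origin). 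Everything else is the convex-combination argument together with linearity and homogeneity of $g_i$, which are immediate.
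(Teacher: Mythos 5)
Your proof is correct and takes essentially the same route as the paper's: decompose $\sum_i\Sym{\Pa\br{M_i\Vx_i+N_i\Vy_i}}$ around the nominal profile, note that the eight points $\delta_i\br{\pm K_{1l},K_{2l}}$ are the vertices of an octagon whose convex hull contains the disc of radius $\delta_i$, and use linearity together with the vertex LMIs to bound each deviation term below by $-X_i$ before summing with \eqref{eq:CCa}. You additionally make explicit two points the paper glosses over --- the implicit $\Sym{\Pa\br{\cdot}}$ wrapper needed in \eqref{eq:CCb}--\eqref{eq:CCc} for the right-hand sides to be symmetric, and the joint affineness in $\br{\Pa,X_1,\ldots,X_n}$ that makes the feasibility problem convex --- both of which are accurate.
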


%\subsection{Computationally Efficient Procedure to Find Monotonicity Domain}

Theorem \ref{thm:MonotoneC} gives us a principled way to choose $\Pa$ so that $\Cop\subset\C\br{\Pa}$. However, the system of constraints \eqref{eq:CC} may be infeasible. In this situation, we can try to see if a scaled version of the constraint set $\C\br{\delta}$ lies within $\C\br{\Pa}$. Specifically, we look for the largest possible $\rho>0$ such that $\C\br{\rho\delta}\subseteq\C\br{\Pa}$.

We then formulate the following problem:
\begin{subequations}
\begin{align}
	& \maxi_{\rho\geq 0,\Pa\in\R^{2n\times 2n},X_i \in\Scal^{2n}} \quad  \rho \\
	&\text{Subject to } \nonumber\\
&\sum_{i=0}^n\Sym{\Pa\br{ M_i\Vox_i+N_i\Voy_i}}\succeq mI+\sum_{i=1}^n X_i\label{eq:CCa} \\
&		X_i \succeq \rho\delta_k\br{-K_{1l} M_i-K_{2l} N_i}, l\in\{1,\ldots,4\}\label{eq:CCb}\\
&		X_i \succeq \rho\delta_k\br{K_{1l} M_i-K_{2l} N_i}, l\in\{1,\ldots,4\}\label{eq:CCc}	\end{align}	\label{eq:ConvFeas}	
\end{subequations}
	
\begin{lemma}\label{lem:ConvFeas}
\eqref{eq:ConvFeas} is a quasi-convex optimization problem and can be solved efficiently. Further, if the matrix
\[\sum_{i=0}^n M_i\Vox_i+N_i\Voy_i\]
is full-rank, the problem is always feasible and the optimal solution satisfies $\Cop\br{\rho\delta}\subseteq \C\br{\Pa}$.
\end{lemma}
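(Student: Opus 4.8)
The plan is to split the statement into three parts: (i)~\eqref{eq:ConvFeas} is a quasi-convex program that can be solved efficiently; (ii)~when the matrix $L:=\sum_{i=0}^n\br{M_i\Vox_i+N_i\Voy_i}$ is full rank, \eqref{eq:ConvFeas} is feasible; (iii)~any feasible point, hence in particular the optimal one, certifies $\Cop\br{\rho\delta}\subseteq\C\br{\Pa}$. Part~(iii) needs no work: the constraints of \eqref{eq:ConvFeas} are precisely the system \eqref{eq:CC} of Theorem~\ref{thm:MonotoneC} with $\delta$ replaced throughout by $\rho\delta$, so Theorem~\ref{thm:MonotoneC} applies verbatim to any feasible $\br{\rho,\Pa,X_1,\ldots,X_n}$.

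For part~(i), I would first fix $\rho\ge 0$ and note that each of \eqref{eq:CCa}, \eqref{eq:CCb}, \eqref{eq:CCc} is then a linear matrix inequality in the decision variables $\br{\Pa,X_1,\ldots,X_n}$, since $\Pa\mapsto\Sym{\Pa(\cdot)}$ is linear and the $X_i$ enter affinely; hence feasibility at a fixed $\rho$ is a semidefinite feasibility problem, which can be decided (and a witness produced) efficiently. I would then show that the set of feasible $\rho$ is an interval containing $0$: if $\br{\Pa,X_i}$ is feasible at $\rho$, then $\br{\Pa,tX_i}$ is feasible at $t\rho$ for every $t\in\brs{0,1}$. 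Indeed, multiplying \eqref{eq:CCb}--\eqref{eq:CCc} by $t\ge 0$ leaves them valid, while \eqref{eq:CCa} is preserved because $\sum_iX_i\succeq t\sum_iX_i$ once $\sum_iX_i\succeq 0$ --- and $X_i\succeq 0$ follows by averaging the two constraints among \eqref{eq:CCb}--\eqref{eq:CCc} whose right-hand sides are negatives of one another (the columns of $K$ occur in pairs $\br{a,b},\br{a,-b}$). Given this interval structure, bisection on the scalar $\rho$, one semidefinite feasibility test per step, yields an $\epsilon$-optimal value in $O\br{\logb{1/\epsilon}}$ steps; this is the sense in which \eqref{eq:ConvFeas} is quasi-convex.

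For part~(ii), observe that the left-hand side of \eqref{eq:CCa} equals $\Sym{\Pa L}$. If $L$ is invertible then $\Pa\mapsto\Pa L$ is a bijection of $\R^{2n\times 2n}$, so I may set $\Pa=cL^{-1}$ to obtain $\Sym{\Pa L}=cI$ with $c>0$ arbitrary. Fixing a small $\rho>0$, I would choose $X_i=t_iI$ with $t_i$ just large enough that $t_iI$ dominates, in the Loewner order, the finitely many right-hand sides of \eqref{eq:CCb}--\eqref{eq:CCc} at that $\rho$ (and at $\Pa=cL^{-1}$); each $t_i$ is bounded by a linear function of $c$ and $\rho$, so \eqref{eq:CCa} reduces to $cI\succeq\br{m+\sum_it_i}I$, which holds once $c$ is enlarged, provided $\rho$ is below a data-dependent threshold. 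This produces a feasible point with $\rho>0$, so the problem is feasible. I expect the crux to be exactly this quantitative balance --- bounding how fast the slack matrices $X_i$ are forced to grow with $\rho$ and with the scale of $\Pa$, and checking that the freedom granted by $\det L\neq 0$ to inflate $\Sym{\Pa L}$ wins --- whereas the LMI and interval structure behind quasi-convexity is routine.
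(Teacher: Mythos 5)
Your proof is correct and follows essentially the same route as the paper: monotonicity of feasibility in $\rho$ (hence bisection over SDP feasibility tests) for quasi-convexity, and the choice $\Pa\propto\inv{\sum_i M_i\Vox_i+N_i\Voy_i}$ for feasibility, with part (iii) delegated to Theorem~\ref{thm:MonotoneC}. You actually supply more detail than the paper in two spots — justifying the interval structure via $X_i\succeq 0$ (obtained from the $\pm$ pairing of the columns of $K$), and exhibiting a feasible point with strictly positive $\rho$, whereas the paper only verifies feasibility at $\rho=0$ with $X_i=0$, which already suffices for the statement.
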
	
\begin{remark}
The condition that $\sum_{i=0}^n M_i\Vox_i+N_i\Voy_i$ is non-singular means that the Jacobian of the power flow equations at the nominal voltage profile $\Vo$	is non-singular. This is generically true, except for very special choices of the $\Vo$ and the network admittance matrix. Further, power system voltage stability criteria are often formulated in terms of the ``distance to singularity'' of the Jacobian matrix, so enforcing a non-singular Jacobian is a reasonable restriction. 
\end{remark}

\section{Related Work}\label{sec:Related}

Several papers have studied conditions for existence of solutions to the PF Equations \cite{bolognani2014existence}\cite{lesieutre1999existence}\cite{wu1982steady}. In \cite{molzahn2012sufficient}, the authors propose a sufficient condition for the insolvability of the PF equations based on a convex relaxation.
Our approach differs from these in the following important ways:

\begin{itemize}
\item To solve the PF equations in $\C\br{\Pa}$, we provide necessary and sufficient conditions, that is, our approach finds a solution in $\C\br{\Pa}$ if there exists one, and a certificate of non-existence if there is no solution.

\item Our approach is algorithmic, that is, we provide an algorithm (based on solving a monotone variational inequality) that is guaranteed to find the solution efficiently, i.e. in polynomial time.

\item If there are additional operational constraints $\br{\Vx,\Vy}\in S$, where $S$ is a convex set and $S\subset \C\br{\Pa}$ , e.g. corresponding to line protection \cite{singh2001direct} and/or line flow limits, we can additionally answer the question of whether there exists a PF solution with $\br{\Vx,\Vy}\in S$. This is an important contribution, since most of the time system operators are interested in finding PF solutions that additionally satisfy operational constraints.
    
\item We can find multiple PF solutions by choosing different nominal voltage profiles $\Vo$. This is relevant for a number of important power system applications, such as assessing distance to voltage collapse and transient stability (computation of the so-called controlling unstable equilibrium point).
\end{itemize}

In \cite{bolognani2014existence}, the authors also propose an algorithm based on a contraction mapping. However, the algorithm only works in a small ball around the origin in the $\br{P,Q}$ space. This was extended to other kind of sets in \cite{KostyaExist}. On the other hand, our results are stated in terms of a  convex constraint in $\br{\Vx,\Vy}$ space. Understanding the set of $\br{P,Q}$ for which the solution $\br{\Vx,\Vy}\in\C\br{\Pa}$, and conversely the set of $\br{\Vx,\Vy}$ for which $\br{P,Q}$ lies in a certain set, is still an open problem, even for the special case where all buses are \PV~buses. This setting was studied and the results were extended and connected with results on synchronization in coupled oscillators in a series of recent papers \cite{dorfler2012synchronization}\cite{dorfler2013synchronization}\cite{dorfler2014synchronization}. These authors provided distinct sufficient and necessary conditions on the injections for the existence of power flow solutions with phase differences satisfying certain bounds. %The authors also note that obtaining non-conservative sufficient conditions is still an open problem, and show numerically that a necessary condition is ``almost'' sufficient for most power systems. Extending this analysis to the setting with \PQ~buses and obtaining non-conservative sufficient conditions on the injections that guarantee existence of the PF solutions is an interesting direction for future work.

In recent work \cite{DjEnergyFun}, we have shown that for the special case of lossless networks (and networks with constant ratio of inductance to resistance) the PF equations can be solved analyzing a convex optimization problem. Specifically, we minimize the so-called energy function over a restricted domain (the convexity domain of the energy function). The optimization problem in \cite{DjEnergyFun} was formulated in polar coordinates, i.e. in terms of the voltage magnitude and phase at each bus. In that setting, we get a single domain characterized by a nonlinear but convex matrix inequality, and any solution within this domain can be found efficiently. In fact, if one writes the PF equations in polar coordinates, one can show that the monotonicity domain coincides precisely with the convexity domain of the energy function. However, this polar-coordinate monotonicity domain of \cite{DjEnergyFun} is not equivalent to monotonicity domain in the Cartesian coordinates discussed in this manuscript. In this context, work reported in the present manuscript was inspired by our earlier attempt to extend the polar-coordinate based approach to lossy networks. Even though we were not able to find a simple convex characterization of the monotonicity domain in the polar coordinates, it lead us to discover that in the Cartesian coordinates the monotonicity domain can be described using LMIs. Numerical tests show that the domain of convexity of the energy function is a superset of the monotonicity domain computed by \eqref{eq:ConvFeas} for lossless networks. Furthermore, in that setting, we were able to prove a stronger result for tree networks showing that there exists a PF solution if and only if there exists one in the convexity domain. The condition characterizing the convexity domain also had a simple physical interpretation - it simply requires voltage magnitudes and phases at the neighboring buses to be ``close'' to each other.

However, the approach reported in the present manuscript is more general. It allows one to tune the monotonicity domain to compute multiple solutions. Furthermore, the monotonicity domain is expressed as an LMI here and can be handled using off-the-shelf software, while the nonlinear matrix inequality from \cite{DjEnergyFun} requires specially designed solvers. Understanding the relationship between the approach of this manuscript and the one from \cite{DjEnergyFun}, in particular, the relationship between cartesian and polar parameterizations, is an important direction for future work.

%It was conjectured in \cite{TavoraB} that the set of $\br{P,Q}$ corresponding to solutions $\br{\rho,\theta}\in\D$ is a convex set on the basis of $3$ and $4$ bus examples. However, it remains to be seen if this is true in general. %This is an interesting direction for future work.

\section{Numerical Illustrations}\label{sec:Num}

This Section presents numerical experiments illustrating theoretical results presented above.  We start by illustrating monotonicity domain on a 3 bus system example, and then discuss  numerical experiments performed on the case9 and case14 systems available with the MATPOWER \cite{zimmerman2011matpower} software. For solving the convex optimization problem \eqref{eq:ConvFeas}, we use the parser-solver CVX \cite{cvx}\cite{gb08}. For solving the variational inequality, we implement our own solver based on the extra-gradient method described in \cite{boydmonotone}. The implementations used here are not optimized and currently do not scale to larger systems. Algorithmic developments that would enable applications of these ideas to larger network will be an important direction for future work.

\subsection{Illustration of Monotonicity domain for 3-bus network}

We consider a 3 bus network with bus $0$ being the slack bus, bus $1$ a \PV~bus and bus $2$ a \PQ~bus. The voltage phasor at the slack bus is taken to be $1+\ic 0$ and at the \PV~bus to be $\expb{\ic\theta}$. The phasor at the \PQ~bus is parameterized as $1+\ic 0+\Vx+\ic \Vy$ where $\Vx,\Vy$ represent the \emph{deviations} from the reference voltage. The variables to be solved for in the PF equations are $\Vx,\Vy,\theta$. We pick a nominal voltage profile by setting all voltages equal to $1+\ic 0$. We use \eqref{eq:ConvFeas} to find a monotonicity domain $\Pa$. We then plot the monotonicity domain in $\br{\Vx,\Vy}$ space for multiple values of $\theta$ in Fig.~(\ref{fig:ThreeBus}).

Fig.~(\ref{fig:ThreeBus}) shows that for small values of $\theta$, the monotonicity domain covers a fairly large space in $\Vx,\Vy$. As $\theta$ increases, the domain shrinks and ultimately, before $|\theta|$ hits $\frac{\pi}{2}$, it becomes empty. This is consistent with the idea that all voltages ``close'' to the nominal voltage profile are contained within the monotonicity domain. Fig.~(\ref{fig:ThreeBus}) also shows that the domain actually includes a fairly large region around the nominal voltage profile.

\begin{figure}[htb]
\centering
        \begin{subfigure}[b]{0.35\columnwidth}
                \includegraphics[width=.8\textwidth]{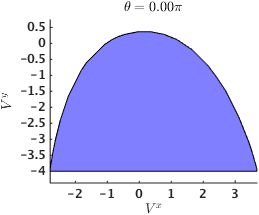}
                \caption{$\theta=0$}
                \label{fig:ThreeBusA}
        \end{subfigure}%	
                \begin{subfigure}[b]{0.35\columnwidth}
                \includegraphics[width=.8\textwidth]{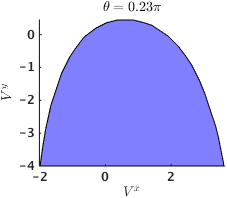}
                \caption{$\theta=.23\pi$}
                \label{fig:ThreeBusB}
        \end{subfigure}%	

        \begin{subfigure}[b]{0.35\columnwidth}
                \includegraphics[width=.8\textwidth]{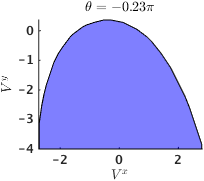}
                \caption{$\theta=-.23\pi$}
                \label{fig:ThreeBusC}
        \end{subfigure}%	
        \begin{subfigure}[b]{0.35\columnwidth}
                \includegraphics[width=.8\textwidth]{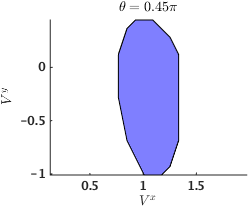}
                \caption{$\theta=.45\pi$}
                \label{fig:ThreeBusD}
        \end{subfigure}%	

        \begin{subfigure}[b]{0.35\columnwidth}
                \includegraphics[width=.8\textwidth]{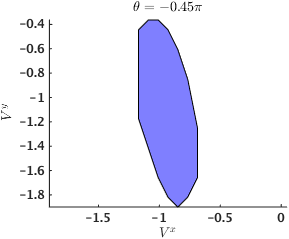}
                \caption{$\theta=-.45\pi$}
                \label{fig:ThreeBusE}
        \end{subfigure}%	
\caption{Monotonicity Domain for a $3$ bus system}\label{fig:ThreeBus}
\end{figure}

\subsection{Computation of Voltage Phasor Bounds}

We use the methodology of Section \ref{sec:MonCC} to find the maximum possible deviation from a nominal voltage profile contained within a monotonicity domain. We choose $\Vo_i=V_0$, so that every bus has a voltage phasor equal to the reference at the slack bus. For a system with only \PQ~buses, this is in fact a solution to the PF equations for the case of no, $0$, injections. In practical power systems, large deviations from the slack bus voltage are relatively rare.
 Therefore, it is reasonable to seek a monotonicity domain that includes all solutions in some neighborhood of this special (nominal) one.

We choose a uniform bound $\delta_{i}=1$ and solve \eqref{eq:ConvFeas} to compute $\Pa$. We report the bounds for our test networks in the Table \ref{tab:Bound}. These bounds easily suffice to cover typical voltages observed in practical power systems operations. However, these domains may be insufficient to describe solutions in the case of extreme injections used in planning studies and contingency analysis. Our numerical experience suggests that these bounds are are conservative. More specifically, there are many $\br{\Vx,\Vy}$ that do not satisfy the bound given by $\rho$, but still lie within $\C\br{\Pa}$. The results of the next section establish this numerically.

\begin{table}[htb]
\vspace{.8em}
\begin{center}
\begin{tabular}{| c | c |}
\hline
System  & Limit on $|V_i-V_0|$ (p.u system) \\
\hline
Case 9  & .2 \\
\hline
Case 14 & .25 	\\
\hline
\end{tabular}
\caption{Bounds on $\Vc$ guaranteeing $\Vc\in\C\br{\Pa}$}\label{tab:Bound}
\end{center}
\end{table}

\subsection{Scaling Loads}
In this Section, we use the monotonicity domain obtained by solving \eqref{eq:ConvFeas}, but instead of relying on the computed bound $\rho$, we test numerically whether existence of a solution of the PF equations implies existence of a solution in the monotonicity domain. Of course, determining existence of a solution to the PF equations is a hard problem, so we instead rely on sufficient conditions for insolvability developed in \cite{molzahn2012sufficient}. We use the implementation available with the MATPOWER package \cite{molzahn2013implementation}.

Our experimental approach is as follows: We take the injections given as part of the Matpower test case, and scale each injection by a complex scalar: $\tilde{S}_i=\alpha S_i$. By scaling the magnitude of injections, we reach a point where there are no solutions to the power flow equations (maximum loadability). By scaling by a complex factor, we effectively achieve different scaling for the real and reactive parts of the injections. We pick the magnitude of $\alpha$ uniformly distributed between $1$ and a maximum value and the phase uniformly distributed between $\pi/6$ and $\pi/3$. (We observe that once we scale beyond a certain { threshold} there are very few cases where the PF equations have a solution.) These choices create a space of injections that include at least part of the solvability boundary (injections beyond which the PF equations have no solution). For a uniform grid of the $\alpha$ space described above with $100$ points, we compute the number of points at which there are no PF solutions (that is in this case we have a certificate of infeasibility based on \cite{molzahn2012sufficient}) and the number of points at which there are no solutions in $\C\br{\Pa}$. The results are shown in Table (\ref{tab:Scale}).
\begin{table}[htb]
\begin{center}
\begin{tabular}{| c | c | c | c |}
\hline
System  & $\C\br{\Pa}$ & $-\C\br{\Pa}$ & NoSol \\
\hline
Case 9 & 38 & 6 & 56 	\\
\hline
Case 14  & 40 & 49 & 11 \\
\hline
\end{tabular}
\caption{Existence of Solutions in the Monotonicity Domain: The first column denotes the number of instances (out of 100) such that a PF solution within $\C\br{\Pa}$ was found. The second denotes the number of instances for which a PF solution within $\C\br{\Pa}$ was not found, and there is no certificate of insolvability of the power flow equations based on the sufficient condition from \cite{molzahn2012sufficient}. The last column denotes the number of instance where we have a certificate of infeasibility.}\label{tab:Scale}
\end{center}
\end{table}

The results show that for the 9 bus network, most cases where a solution exists lie within the monotonicity domain. However, for the 14 bus system, there exists a significant number of cases for which the solution does not lie in the monotonicity domain, if it does exist. It turns out that there are larger monotonicity domains that contain the original one (we checked this post-hoc by solving a feasibility problem to find a $\Pa$ such that $\C\br{\Pa}$ includes all the solutions found using Newton-Raphson). Thus, it seems like the choice of $\Pa$ based on \eqref{eq:ConvFeas} is not optimal and larger monotonicity domains could be found. How one might do this in a principled way constitutes a comprehensive direction for future work.

\subsection{Comparison to Energy Function Approach}

In this Section, we consider a lossless modification of the 9 bus and 14 bus networks and perform the same experiment as described in the previous Section. This allows us to compare the approach developed in this manuscript with the energy function based approach described in \cite{DjEnergyFun}. 

The results (table \ref{tab:EF}) show that for these networks, existence of a solution seems to imply existence within $\C_{EF}$ but not within $\C_{\Pa}$. This suggests that $\C_{EF}$ is a superset of $\C_{\Pa}$ and that potentially choosing $\Pa$ based on \eqref{eq:ConvFeas} does not cover all the solutions. Closing this gap and designing approaches to ensure that $\C_{\Pa}$ would contain all the solutions in $\C_{EF}$ is a direction for future work. We anticipate that development of this approach will likely suggest improvements applicable to lossy networks as well.
\begin{table}[htb]
\begin{center}
\begin{tabular}{| c | c | c | c  | c | c |}
\hline
System  & $\C\br{\Pa}$ & $\C_{EF}$ & $-\C\br{\Pa}$ & $-\C_{EF}$ & NoSol \\
\hline
Case 9 LL & 50 & 56 & 6 & 0 & 44 	\\
\hline
Case 14 LL & 62 & 92 & 30 & 0 & 8 	\\
\hline
\end{tabular}\caption{Monotonicity vs Energy Function: The first, third and final columns have the same meaning as table (\ref{tab:Scale}). The second column is the number of instances where a PF solution within $\C_{EF}$ was found, and the final column the number of instances where a PF solution within $\C_{EF}$ was not found and there is no certificate of infeasibility.}\label{tab:EF}
\end{center}
\end{table}

\section{Conclusions and Future Work}

We have presented a novel approach to solving the PF equations based on monotone operator theory. Our main technical contribution is a characterization of the family of domains over which the PF operator is monotone. Within any of these domains, there can be at most one power flow solution. The approach leads to efficient algorithms for determining existence and actually finding solutions to the PF equations within the monotonicity domains. In spite of the progress reported, there still remain many unanswered questions. First, the numerical experiments show that our approach to selecting $\Pa$ based on solving \eqref{eq:ConvFeas} is potentially conservative and misses some solutions that could be captured by other choices of $\Pa$. Figuring out the relationship between the monotone-operator approach developed h`ere and the energy function approach of \cite{DjEnergyFun} is another important direction for future work. Finally, developing algorithms that scale well and allow these techniques to be applied to larger systems is another item in our path forward.
\section*{Acknowledgments}
The work at LANL was carried out under the auspices of the National Nuclear Security Administration of the U.S. Department of Energy at Los Alamos National Laboratory under Contract No. DE-AC52-06NA25396 and it was partially supported by DTRA Basic Research Project $\#10027-13399$. The authors also acknowledge partial support of the Advanced Grid Modeling Program in the US Department of Energy Office of Electricity.
\bibliography{Ref}
\bibliographystyle{IEEEtran}
%!TEX root=ConvexEnergyFunction.tex
\section*{Appendix}\label{sec:App}
\subsection{Proof of theorem \ref{thm:Monotone}}\label{sec:AppendPfa}
\begin{proof}
The result is a direct consequence of the strong monotonicity of $F_{\Pa}$ over $\C\br{\Pa}$. By lemma \ref{lem:Jac}, the Jacobian of $F_{\Pa}$ is given by
\[\Jac{F_{\Pa}}{\beta}=\sum_{i=0}^n \Pa\br{M_i\Vx_i+N_i\Vy_i}\]
where $\beta=\stc{\Vx}{\Vy}$. 	
$F_{\Pa}$ is strongly montone with modulus $m$ if (theorem \ref{thm:VIcond})
\[\Sym{\Jac{F_{\Pa}}{\beta}}=\sum_{i=0}^n \Sym{\Pa\br{M_i\Vx_i+N_i\Vy_i}}\succeq mI\]	
Thus, $F_{\Pa}$ is strongly monotone over the domain $\beta\in\C\br{\Pa}$. The remaining results follow from theorem \ref{thm:VI}
\end{proof}
\subsection{Proof of theorem \ref{thm:MonotoneC}}\label{sec:AppendPfb}
\begin{proof}Let $Q_0=\sum_{i=0}^{n} \Sym{\Pa\br{M_i\Vox_i+N_i\Voy_i}}$. Then,
\begin{align}
&\sum_{i=0}^{n} \Sym{\Pa\br{M_i\Vx_i+N_i\Vy_i}}=\nonumber\\
&Q_0+\sum_{i=0}^{n} \Sym{\Pa\br{M_i\br{\Vx_i-\Vox_i}+N_i\br{\Vy_i-\Voy_i}}}\label{eq:Pfa}
\end{align}
\begin{align*}
&\Vc\in \Cop\br{\delta}\equiv \sqrt{|\Vx_i-\Vox_i|^2+|\Vy_i-\Voy_i|^2}\leq \delta_{i}\\
&\implies \stc{\Vx_i-\Vox_i}{\Vy_i-\Voy_i}\in \delta_i\Conv{\left\{\stc{K_{1l}}{K_{2l}},\stc{-K_{1l}}{K_{2l}}\right\}_{1\leq l \leq 4}}	
\end{align*}
This can be seen simply by plotting the points $\delta_i\br{K_{1l}+\ic K_{l2}}$ on the complex plane and observing that their convex hull contains the disc of radius $\delta_{i}$ centered at the origin. Thus, for any $\Vc\in \Cop\br{\delta}$,
\begin{align*}
&-\br{M_i\br{\Vx_{i}-\Vox_{i}}+N_i\br{\Vy_{i}-\Voy_{i}}}\in\\
&\delta_{i}\Conv{\left\{-K_{1l}M_i-K_{2l}N_i,K_{l1}M_i-K_{2l}N_i\right\}_{l=1,\ldots,4}}
\end{align*}
From \eqref{eq:Pfa}, we now have
\begin{align*}
& \sum_{i=0}^{n} \Sym{\Pa\br{M_i\Vx_i+N_i\Vy_i}}\succeq Q_0-\sum_{k\in\E} X_k \forall \,\Vc\in \Cop
\end{align*}
By \eqref{eq:CCa}, we have $\Vc\in \Cop\br{\delta} \implies \Vc\in\C\br{\Pa}$, so that $\Cop\br{\delta}\subset\C\br{\Pa}$. 
\end{proof}

\begin{lemma}\label{lem:Jac}
The Jacobian of $F\br{\Vx,\Vy}$ is given by:
\begin{align}
\Jac{F}{\Vx,\Vy}=\sum_{i=0}^n M_i \Vx_i+N_i\Vy_i
\end{align}	

\end{lemma}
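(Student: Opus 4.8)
The plan is to exploit that every component of $F$ defined by \eqref{eq:F2a}--\eqref{eq:F2c} is a homogeneous quadratic form in the stacked voltage vector $\Vc=\stc{\Vx}{\Vy}$ plus a constant ($-P_i$, $-Q_i$, or $-\Vset_i^2$). Consequently each scalar component $[F\br{\Vc}]_k$ has a constant symmetric Hessian $H_k$, its gradient is linear in $\Vc$, and therefore $\Jac{F}{\Vx,\Vy}$ --- whose $k$-th row is that gradient --- is a matrix-valued affine function of the genuine variables $\Vx_1,\dots,\Vx_n,\Vy_1,\dots,\Vy_n$, with the contribution of the fixed slack components $\Vx_0,\Vy_0$ entering only as a constant term. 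Hence $\Jac{F}{\Vx,\Vy}=\sum_{i=0}^n\br{A_i\Vx_i+B_i\Vy_i}$ for constant $2n\times 2n$ matrices $A_i,B_i$, and it remains to compute those matrices and check that $A_i=M_i$, $B_i=N_i$.

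Next I would carry out the differentiation component by component, treating the three cases \eqref{eq:F2a} ($i\in\Lo\cup\G$), \eqref{eq:F2b} ($i\in\Lo$), \eqref{eq:F2c} ($i\in\G$) separately. For a fixed component I take $\partial/\partial\Vx_l$ and $\partial/\partial\Vy_l$ and sort the resulting terms into two groups: the self-bus diagonal terms $G_{ii}\br{\powb{\Vx_i}{2}+\powb{\Vy_i}{2}}$, $B_{ii}\br{\cdots}$, whose derivatives $2G_{ii}\Vx_i$, etc., produce the rank-one pieces $e_iG^i$, $e_iB^i$; and the neighbour sums $\sum_{j\sim i}\br{\cdots}$, which, differentiated in the ``self'' direction $\Vx_i$ or $\Vy_i$, give linear combinations of $\Vx_j,\Vy_j$ over all $j$ (the $\diagb{G^i}$, $\diagb{B^i}$ diagonal blocks) and, differentiated in a ``neighbour'' direction $\Vx_j,\Vy_j$ with $j\ne i$, give further terms proportional to $\Vx_i,\Vy_i$. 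Collecting the coefficient of each $\Vx_i$ and of each $\Vy_i$ across all $2n$ rows then assembles $A_i,B_i$; the block pattern one reads off is exactly that of the $M_i,N_i$ in the definition, including the $\Ind{j\in\Lo}$ masks in $G^i_\Lo,B^i_\Lo$ (the reactive-balance component $[F]_{n+j}$ is present only for $j\in\Lo$, so the corresponding rows/columns vanish for $j\in\G$), the special bottom-right blocks for $i\in\G$ (where the $(n+i)$-th equation is the magnitude constraint \eqref{eq:F2c}, yielding the $2e_i\tran{e_i}$ term), and the absence of any rank-one correction in $M_0,N_0$ (bus $0$ never occurs as a ``self'' index in any of the defining equations).

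Finally I would double-check sign consistency and that stacking the gradient rows literally reproduces $\sum_{i=0}^n\br{M_i\Vx_i+N_i\Vy_i}$; this is aided by observing that the $k$-th row of the Jacobian must equal $\tran{\Vc}H_k$ for the symmetric Hessian $H_k$ of $[F]_k$, so the columns of $M_i$ and $N_i$ are just slices of the $H_k$'s. The main obstacle is purely bookkeeping: keeping signs straight through the conjugation in $V_i\herm{\br{YV}_i}$, and --- the genuinely delicate point --- separating, within each neighbour sum, the ``self-derivative'' contribution (which populates a diagonal block) from the ``cross-derivative at a neighbour'' contribution (which populates the $e_i$ rank-one correction), all while honouring the $\Lo$/$\G$ case split. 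Everything else is a routine termwise partial-derivative computation.
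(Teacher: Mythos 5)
Your proposal is correct and follows essentially the same route as the paper's proof: a direct termwise differentiation of \eqref{eq:F2a}--\eqref{eq:F2c} with the $\Lo$/$\G$ case split, followed by assembling the resulting entries into the block pattern of the $M_i$ and $N_i$ (the paper organizes the bookkeeping by writing the Jacobian as four $n\times n$ blocks $S,T,O,L$ and listing their entries, whereas you organize it by collecting the coefficient of each $\Vx_i$ and $\Vy_i$, which is only a presentational difference). Your preliminary observation that $F$ is quadratic-plus-constant, so the Jacobian is necessarily affine in the voltage components, is a sound and helpful sanity check but does not change the substance of the argument.
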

\begin{proof}
The entries of the Jacobian are given by:
\begin{align*}
\Jac{F}{\Vx,\Vy}=\begin{pmatrix}
S & T \\
O & L 		
\end{pmatrix}
\end{align*}	
with each block being $n \times n$. 
\begin{align*}
S_{ii} &= 2G_{ii}\Vx_i+\sum_{j\sim i} B_{ij}\Vy_j	-\sum_{j\sim i} G_{ij}\Vx_j\\
S_{ij} &= -B_{ij}\Vy_i-G_{ij}\Vx_i\\
T_{ii} &= 2G_{ii}\Vy_i-\sum_{j\sim i} B_{ij}\Vx_j	-\sum_{j\sim i} G_{ij}\Vy_j\\
T_{ij} &= B_{ij}\Vx_i-G_{ij}\Vy_i 
\end{align*}
The entries of $O,L$ depend on whether $i$ is a \PQ~bus or \PV~bus. For $i\in\Lo$, we have
\begin{align*}
O_{ii} &= 2B_{ii}\Vx_i+\sum_{j\sim i} B_{ij}\Vx_j	-\sum_{j\sim i} G_{ij}\Vy_j\\
O_{ij} &= B_{ij}\Vx_i+G_{ij}\Vy_i\\
L_{ii} &= 2B_{ii}\Vy_i+\sum_{j\sim i} B_{ij}\Vy_j	-\sum_{j\sim i} G_{ij}\Vx_j\\
L_{ij} &= B_{ij}\Vy_i+G_{ij}\Vx_i
\end{align*}
For $i\in\G$, we have
\begin{align*}
O_{ii} = 2\Vx_i,O_{ij} = 0,L_{ii} = 2\Vy_i,L_{ij} = 0
\end{align*}
Using these expressions, it is not hard to see that 
\begin{align*}
\Jac{F}{\Vx,\Vy}=\sum_{i=0}^n M_i\Vx_i+N_i\Vy_i
\end{align*}
\end{proof}
\subsection{Proof of lemma \ref{lem:ConvFeas}}\label{App:ConvFeas}
\begin{proof}
Suppose $\exists \Pat,X_i \in\Scal^{2n}$ such that the constraints are feasible for some $\rho$. Then, clearly, with the same choice, they are also feasible for any smaller $\rho$. Then, the sub-level sets of the problem are convex, and hence the problem is a quasi-convex optimization problem. Suppose $\sum_{i=0}^n M_i\Vox_i+N_i\Voy_i$ is full-rank. When $\rho=0$, for any fixed $\Pat$, we can choose $X_k=0$ and $T_k,S_k$ to satisfy the final equality constraints. Then, one needs to find $\Pa$ such that
\begin{align*}
&\Sym{\Pa\br{\sum_{i=0}^n M_i\Vox_i+N_i\Voy_i}}\succeq mI	
\end{align*}
This can always be done if $\sum_{i=0}^n M_i\Vox_i+N_i\Voy_i$ is full rank, since one can simply choose
\[\Pa=m\inv{\sum_{i=0}^n M_i\Vox_i+N_i\Voy_i}\]
Hence, this optimization problem is always feasible when $\sum_{i=0}^n M_i\Vox_i+N_i\Voy_i$ is full rank.
\end{proof}

\end{document}